\documentclass[12pt]{article}
\usepackage[utf8]{inputenc}
\usepackage{amsmath, amsthm, amssymb, amsfonts}
\usepackage{import}
\usepackage[numbers, comma]{natbib}
\usepackage{fullpage}
\usepackage{authblk}
\usepackage{apxproof}
\usepackage{import}
\usepackage{mathtools}
\usepackage{tikz}
\usepackage{float}
\usepackage{enumitem}

\usetikzlibrary{calc,patterns,angles,shapes, positioning, intersections, quotes}

\newtheorem{theorem}{Theorem}
\newtheoremrep{thm}{Theorem}
\newtheoremrep{lem}{Lemma}

\usepackage{graphicx}
\newtheorem{corollary}[theorem]{Corollary}

\newtheorem{conjecture}[theorem]{Conjecture}

\theoremstyle{definition}
\newtheorem{definition}{Definition}[section]

\usepackage[colorlinks=false,pagebackref=true, hidelinks]{hyperref}

\allowdisplaybreaks

\begin{document}
\title{Project selection with partially verifiable information\thanks{We are grateful to Laura Doval, Federico Echenique and Amit Goyal for helpful comments and suggestions. We would also like to thank seminar audiences at Caltech and Delhi School of Economics for valuable feedback. }}

\author{Sumit Goel\thanks{California Institue of Technology; sgoel@caltech.edu; 0000-0003-3266-9035}  \quad Wade Hann-Caruthers\thanks{California Institue of Technology; whanncar@gmail.com; 0000-0002-4273-6249}}

\date{}

\maketitle              





\begin{abstract}
    We consider a principal agent project selection problem with asymmetric information. There are $N$ projects and the principal must select exactly one of them. Each project provides some profit to the principal and some payoff to the agent and these profits and payoffs are the agent's private information. We consider the principal's problem of finding an optimal mechanism for two different objectives: maximizing expected profit and maximizing the probability of choosing the most profitable project. Importantly, we assume partial verifiability so that the agent cannot report a project to be more profitable to the principal than it actually is. Under this no-overselling constraint, we characterize the set of implementable mechanisms. Using this characterization, we find that in the case of two projects, the optimal mechanism under both objectives takes the form of a simple cutoff mechanism. The simple structure of the optimal mechanism also allows us to find evidence in support of the well-known ally-principle which says that principal delegates more authority to an agent who shares their preferences.
\end{abstract}
\section{Introduction}

Suppose a principal has to chose exactly one of $N$ available projects but does not know how profitable they are. There is an agent who is fully informed about these profits but also has its own preference over the projects. The principal would like to use the agent's information and chose a profitable project. Assuming the principal has commitment power and cannot use transfers to incentivize the agent, we consider the problem of finding the optimal mechanism for the principal in this project selection framework.  \\  

Let's first quickly consider the standard setting where the agent can lie arbitrarily. So for any principal and agent payoff vectors $(p,a) \in \Theta$, the agent can report any $(\pi, \alpha) \in \Theta$.  Consider any mechanism $d$ and suppose its range is $S \subset [N]$. Under this mechanism, the agent will always report a $(\pi, \alpha)$ so that its favorite project in $S$ defined by $\arg max_{i \in S} a_i$ is chosen. Thus, the principal can only fix a set $S \subset [N]$ and commit to choosing agent's favorite project in $S$. This means that if the payoffs $p_i, a_i$ are i.i.d. across projects, all mechanisms lead to expected payoff $\mathbb{E}[p_i]$ for the principle. And the probability of the principal choosing the best project is $\frac{1}{N}$ for any mechanism. Thus, the principle may as well choose a constant mechanism and commitment power doesn't buy anything for the principal. \\

An important assumption in the above setting is that the agent can lie arbitrarily. But the agent's ability to manipulate may be limited if it is required to support its claims with some form of evidence. For instance, if a tech firm wants to a hire a programmer and the hiring committee is biased towards candidates with better social skills, the firm can require the hiring committee to provide certificates that support the reported coding skills of the candidates. Now the hiring committee can still potentially hide certificates and understate the coding ability of an applicant, but it cannot furnish fake certificates and overstate its coding abilities. Thus, by requiring some kind of supporting evidence, the principal can constraint the kind of manipulations that the agent can make. \\

In this paper, we consider a setting where the agent cannot oversell a project to the principal. That is, the agent cannot say a project is more profitable for the principal than it actually is. So if the true state is $(p,a) \in \Theta$, the agent's message space is  $M(p,a)=\{(\pi, \alpha) \in \Theta: \pi_i \leq p_i\ \text{ for all } i \} $. Since the set of manipulations that the principal has to guard against is now smaller, the class of truthful mechanisms is potentially bigger. Note that our message correspondence  satisfies the nested range condition ($\theta' \in M(\theta) \implies M(\theta') \subset M(\theta)$) from \citet{green1986partially} and thus, we can without loss of generality restrict attention to truthful mechanisms.\\

Under the partial verifiability constraint of no overselling-constraint, we first characterize the class of truthful mechanisms and call them table mechanisms. A table mechanism is defined by an increasing set function which determines the set of projects on the table as a function of the reported profit values $\pi$. The mechanism then chooses the agent's favorite project according to the reported $\alpha$ from those on the table. Thus, the class of truthful mechanisms is now significantly bigger. We use this characterization to find the optimal mechanism for the principal under two different objectives.\\

First, we consider the objective of maximizing the expected profit for the principal. For the case of two projects, we show that the optimal table mechanism is a simple cutoff mechanism. In this mechanism, one project is always on the table and the other project is on the table if its reported profit meets a cutoff that depends on the bivariate distribution $F$ from which the principal and agent payoffs $(p_i,a_i)$ are drawn for each project. When the payoffs are independent, this cutoff equals the expected profit. We also discuss the well-known ally principle in our framework by considering the case where $F$ is bivariate normal. In this case, we find that the optimal cutoff is decreasing in the correlation between the principal and agent payoffs and thus, the principal lends more leeway to an agent who shares their preferences.  For the case of $N>2$ projects, we obtain a relatively weak result. Assuming $F$  is uniform on $[0,1]^2$, we show that a single cutoff mechanism is optimal in the simple subclass of cutoff mechanisms. Next, we also consider the objective of maximizing the probability of choosing the best project. Again, for the case of two projects, we show that a cutoff mechanism is optimal and when the project payoffs are independent, the optimal cutoff equals the median of the principal's profit. \\

\subsection{Related literature}

Mechanism design has often been used to deal with problems of asymmetric information ( \citet{myerson1981optimal}, \citet{myerson1983efficient}).  An important theme in the literature on mechanism design is characterizing the set of implementable mechanisms (\citet{gibbard1973manipulation}, \citet{satterthwaite1975strategy}, \citet{dasgupta1979implementation}, \citet{green1986partially}). In particular,  \citet{green1986partially} introduce  the idea of partially verifiable information in mechanism design and identify a necessary and sufficient condition, called the ``Nested Range Condition",  under which the set of implementable mechanisms coincides with the set of truthfully implementable mechanisms (i.e. the revelation principle holds). Later research focuses on identifying implementability conditions in other  environments with partially verifiable information; \citet{kartik2009strategic} looks at Nash implementation, \citet{deneckere2008mechanism} considers lying with finite costs, \citet{ben2012implementation} and \citet{singh2001implementation} allow for transfers,  and  \citet{caragiannis2012mechanism} considers probabilistic verification. Some computer scientists have looked at the trade-off between monetary transfers and partial verifiability in terms of implementing social choice functions (\citet{ferraioli2016verify}, \citet{fotakis2017combinatorial}). Our setup belongs to the environment considered by Green and Laffont and satisfies their ``Nested Range Condition". Thus, without loss of generality, we restrict attention to truthfully implementable mechanisms in our analysis.\\

Our paper contributes to the literature finding optimal or efficient mechanisms in  environments with a specific form of partial verifiability (\citet{maggi1995costly}, \citet{lacker1989optimal},  \citet{moore1984global}, \citet{celik2006mechanism}). For instance,     \citet{munro2014hide} argues using a model in which only some expenditure can be hidden that spouses hiding income and assets from one another is efficient. \citet{deneckere2001mechanism} explains the complicated selling practices of real-world monopolists by considering an economy where some agents have limited ability to misrepresent their preferences. This paper considers the partial verifiability constraint of no-overselling and potentially explains the use of cutoff mechanisms in settings that only admit positive evidence which can be hidden but not fabricated.  \\

Our work also relates to the literature studying principal agent project selection problems with different modeling assumptions (\citet{ben2014optimal}, \citet{mylovanov2017optimal},  \citet{armstrong2010model} and \citet{guo_project_nodate}).  \citet{ben2014optimal} and \citet{mylovanov2017optimal} consider a problem where the principal has to choose one of $N$ agents  who prefer being chosen and provide some private value to the principal from being chosen. In \citet{ben2014optimal}, the principal can verify his value from agent $i$ at a cost $c_i$, while \citet{mylovanov2017optimal} assumes ex-post verifiability so that the principal can penalize the winner by destroying a certain fraction of the surplus.   \citet{armstrong2010model} considers a project delegation problem in which the principal can verify characteristics of the chosen project but is uncertain about the set of available projects. These papers find their respective optimal mechanisms for the principal and call them the  favored agent mechanism \citet{ben2014optimal}, shortlisting procedure  \citet{mylovanov2017optimal}, and the threshold rule  \citet{armstrong2010model}. While these mechanisms have some flavor of the cutoff mechanisms we obtain in this paper, there are important differences in the setup we consider here.  Primarily, in their setups, the principal is empowered by ex-post verifiability of the reported values and the ability to use a prohibitively high punishment to deter the agent from telling \textit{any} lie, whereas in our setup, the agent is constrained in that he cannot oversell, but the principal does not have the power to directly deter the agent from underselling. 
 \\
 
The paper proceeds as follows. In section 2, we present the model and definitions. Section 3 characterizes the class of truthful mechanisms. In section 4 and 5, we consider the two different objectives of maximizing expected profit and maximizing probability of choosing the best project for the principal. In section 6, we give some remarks and Section 7 concludes. The more technical proofs are in the appendix.\\

\section{Model}
There are two parties: a principal and an agent. The principal has a set of available projects $[N]=\{1,2,\dots, N\}$ and must choose one of them. Each project $i$ leads to payoffs $(p_i,a_i) \in X \subset \mathbb{R}^2$ where $p_i$ denotes the profit for the principal and $a_i$ is the utility to the agent. The payoffs $(p_i,a_i)$ are i.i.d. from a bi-variate distribution $F$ and this is all the information that the principal has. The agent knows the true payoffs from all the projects $(p, a) \in X^n=\Theta$.\\
 
We assume that the principal can commit to a mechanism $d:\Theta \to [N]$ so that if the agent reports payoffs $(\pi, \alpha)$ when the true state is $(p,a)$, the project $d(\pi, \alpha)$ is chosen leading to final payoffs $p_{d(\pi, \alpha)}, a_{d(\pi, \alpha)}$ for the principal and agent respectively. \\

As discussed earlier, if the agent can lie arbitrarily, the principal cannot do better than by choosing a project at random. So we assume a natural partial verifiability constraint of no overselling under which the agent cannot report a project to be more profitable than it actually is. Such a constraint on the message space may be inherent in the environment or induced by the principal by requiring the agent to furnish some kind of evidence supporting its claims. Formally, the agent's state dependent message space takes the form:
\begin{align*}
    M(p,a)=\{(\pi, \alpha) \in \Theta : \pi_i \leq p_i \text{ } \forall i \in [N] \}
\end{align*}

Since our message space satisfies the Nested Range condition of \citet{green1986partially}, $$\theta' \in M(\theta) \implies M(\theta') \subset M(\theta)$$ we can without loss of generality restrict attention to truthful mechanisms. 
 
\begin{definition}
A mechanism $d$ is \textit{truthful} if for any $(p,a)$ and $(\pi,\alpha) \in M(p,a) $
\begin{align*}
    a_{d(p, a)} \geq a_{d(\pi, \alpha)}.
\end{align*}
\end{definition}

We'll consider the principal's problem of finding the optimal truthful mechanism for two different objectives:
\begin{itemize}
    \item Maximizing the expected profit: $$\max_{d: d \text{ is truthful}} \mathbb{E}[p_{d(p,a)}]$$
    \item Maximizing the probability of choosing the best project: $$\max_{d: d \text{ is truthful }} \mathbb{P}[d(p,a) \in \arg max_i p_i]$$
\end{itemize}

First, we characterize the class of truthful mechanisms.

\section{Characterization of truthful mechanisms}

We begin by defining a special class of mechanisms which we call table mechanisms.

\begin{definition}
A mechanism $d$ is a table mechanism if there exists a function $f: \Theta \rightarrow 2^{N}$ with the properties
\begin{itemize}
\item  $f(p, a) \neq \phi$
\item $f(p, a)=f\left(p, a^{\prime}\right)=f(p)$
\item  $p \leq p^{\prime} \Longrightarrow f(p) \subset f\left(p^{\prime}\right)$
\end{itemize}
so that
 $$d(p, a) \in \arg \max _{i}\left\{a_{i}: i \in f(p)\right\} .$$

\end{definition}

In words, a table mechanism is defined by an increasing set function that determines the set of projects on the table and the mechanism always chooses the agent's favorite project among those on the table.  The first condition says that there is a default project which is always on the table. The second condition says that the set of projects on the table cannot depend on the agent's payoffs. The third condition is that the set of projects on the table weakly increases as the profit vector increases. 

\begin{theorem}
\label{thm:table}
$d$ is truthful if and only if it is a table mechanism.
\end{theorem}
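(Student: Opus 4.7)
The plan is to work throughout with the natural candidate
\begin{equation*}
f_i(p) \;=\; \mathbf{1}\bigl[\,\exists\,\alpha \in X^N \text{ such that } d(p,\alpha) = i\bigr],
\end{equation*}
which flags whether project $i$ is ever selected when reported profits are $p$. For the easy direction (table mechanism $\Rightarrow$ IC), I would argue directly: given any true $(p,a)$ and misreport $(\pi,\alpha) \in M(p,a)$, no-overselling gives $\pi \le p$ coordinatewise, so the monotonicity of $f_j$ with $j := d(\pi,\alpha)$ forces $f_j(p) \ge f_j(\pi) = 1$. Thus the project selected under the misreport is on the table at the truth, and since $d(p,a) \in \argmax_i\{a_i : f_i(p)=1\}$, the truthful payoff weakly dominates $a_{d(\pi,\alpha)}$.

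For the harder direction (IC $\Rightarrow$ table mechanism), I would verify the three defining requirements in turn. The argmax property is immediate: reports of the form $(p,\alpha)$ are always in $M(p,a)$, so IC gives $a_{d(p,a)} \ge a_{d(p,\alpha)}$ for every $\alpha$; since $\alpha$ can be chosen so that $d(p,\alpha)$ is any project $j$ with $f_j(p)=1$, we deduce $d(p,a) \in \argmax_i\{a_i : f_i(p)=1\}$.

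The heart of the proof is monotonicity of each $f_i$ in every coordinate of $p$. Suppose $f_i(p)=1$ with witness $\tilde{\alpha}$ (so $d(p,\tilde{\alpha}) = i$) and let $p' \ge p$. I would consider the true state $(p',\alpha^*)$, where $\alpha^*_i = 1$ and $\alpha^*_j = 0$ for $j \neq i$. Since $p \le p'$, the report $(p,\tilde{\alpha})$ lies in $M(p',\alpha^*)$ and yields project $i$ with payoff $\alpha^*_i = 1$; IC then forces $\alpha^*_{d(p',\alpha^*)} \ge 1$, which pins the chosen project down to $i$, so $f_i(p') = 1$. Once monotonicity is in hand, the default clause follows automatically: the set $\{i : f_i(p)=1\}$ is weakly increasing in $p$, and it is nonempty even at $p = \mathbf{0}$ (it contains $d(\mathbf{0},\mathbf{0})$), so any fixed element of $\{i : f_i(\mathbf{0})=1\}$ is a universal default.

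The main obstacle is the monotonicity step, which is where both ingredients of the model combine: no-overselling is what lets the agent at a higher-profit state mimic the lower-profit state, while the ``focusing'' payoff $\alpha^*$ is what promotes the weak inequality supplied by IC into an exact identification of the chosen project.
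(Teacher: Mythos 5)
Your proof is correct, and the converse direction takes a genuinely different route from the paper's. The paper defines the table by $f_i(p)=1$ iff there exist $p'\le p$ and $a'$ with $d(p',a')=i$; with that witness set, monotonicity holds \emph{by construction}, and the only place incentive compatibility is invoked is in establishing the argmax property (a project $j$ with $f_j(p)=1$ and $a_j>a_{d(p,a)}$ would give a profitable deviation to some $(p',a')$ with $p'\le p$). You instead take the minimal candidate $f_i(p)=\mathbf{1}[\exists\,\alpha:\,d(p,\alpha)=i]$, which makes the argmax property nearly free (only same-$\pi$ deviations are needed) but shifts the real work onto monotonicity, which you settle with the ``focusing'' payoff vector $\alpha^*$ ($\alpha^*_i=1$, $\alpha^*_j=0$ otherwise): the agent at $(p',\alpha^*)$ can mimic the state $(p,\tilde\alpha)$ because $p\le p'$, and IC then forces $d(p',\alpha^*)=i$ exactly. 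That step is sound --- note the message space constrains only the reported profits, so $(p,\tilde\alpha)\in M(p',\alpha^*)$ --- and your derivation of the default project from monotonicity plus nonemptiness of $\{i:f_i(\mathbf{0})=1\}$ matches the paper's use of the all-zeros profile. The trade-off is aesthetic: the paper's construction buys monotonicity for free at the cost of a larger witness set, while yours exhibits the table as exactly the set of projects achievable at each reported profit profile (one can check the two constructions actually coincide for any IC mechanism) and isolates more cleanly how no-overselling interacts with IC. The forward direction is essentially identical to the paper's.
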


It is fairly straightforward from the definitions to check that table mechanisms are truthful. Indeed, under a table mechanism, the agent prefers reporting higher $\pi$'s to reporting lower ones and, given any report of $\pi$'s, prefers reporting her payoffs to misreporting her payoffs (since such a misrepresentation can only lead the principal to make a choice which gives the agent a lower payoff.) The other direction is more involved.

\begin{proof}[Proof of Theorem \ref{thm:table}]
Suppose $d$ is a table mechanism. Consider any profile $(p,a)$. If the agent reports some $(\pi,\alpha)$, we know that $\pi \leq p$ from the constraint $(\pi, \alpha) \in M(p,a)$. Since $f$ is increasing, it follows that $f(\pi)  \subset f(p)$. Since the agent gets his preferred project among those available and reporting truthfully maximizes his set of available projects, the agent cannot gain by misreporting. Therefore, $d$ is truthful.\\

Now suppose that $d$ is a truthful mechanism. Define the function $f: \Theta \to 2^N$ so that $i \in f(p,a)$ if and only if there exists some $(p',a') \in M(p,a)$ such that $d(p',a')=i$. First, we will show that the function satisfies the three properties in the definition of table mechanism:

\begin{itemize}
\item Observe that $(p,a) \in M(p,a)$ and thus, $d(p,a) \in f(p,a) \implies  f(p,a) \neq \phi$ for any $(p,a) \in \Theta$

\item The property that $f$ does not depend on agent payoffs follows from observing that $M(p,a)=M(p,a')$. Thus, if $i \in f(p,a)$, $i \in f(p,a')$ and vice versa. Thus, we have $f(p,a)=f(p,a')$ for any $(p,a), (p,a')$. 

\item Take any $p,p'$ such that $p \leq p'$. Suppose $i \in f(p)$ which implies that there exists $(\pi, \alpha) \in M(p,a)$ with $\pi \leq p$ so that $d(\pi, \alpha)=i$. But then, $(\pi, \alpha) \in M(p',a)$ as well and so $i \in f(p')$. Thus, we get that $f(p) \subset f(p')$. 
\end{itemize}

Now we want to show that for any state $(p,a)$,  $d(p,a) \in \arg \max _{i}\left\{a_{i}: i \in f(p)\right\}$

Suppose towards a contradiction that $d(p, a)$ is not in this set. By definition, $d(p,a) \in f(p)$. Let $j \in \arg \max _{i}\left\{a_{i}: i \in f(p)\right\}$. Then $a_j > a_{d(p,a)}$ and $j \in f(p)$. But the fact that $j \in f(p)$  implies that there exists a $(\pi, \alpha) \in M(p,a)$ such that $d(\pi,\alpha)=j$. But then, the agent can misreport at state $(p,a)$ to $(\pi, \alpha)$ and gain from this manipulation. This contradicts the fact that $d$ is truthful and so it must be that $d(p,a) \in \arg \max _{i}\left\{a_{i}: i \in f(p)\right\}$. It follows then that $d$ is a table mechanism. 
\end{proof}

For simplicity going forward, we will assume (without loss of generality) that $N \in f(p)$ for all $(p,a) \in \Theta$. That is, in a table mechanism, project $N$ is always on the table.\\

Before discussing the results, we define a subclass of table mechanisms that take a simple cutoff form.

\begin{definition}
A mechanism $d$ is a cutoff mechanism if it is a table mechanism and for $i=1, \dots, N-1$, there exist cutoffs $c_i \in X$, such that $i \in f(p)$ if and only if $p_i \geq c_i$.
\end{definition}

In a cutoff mechanism, a project is on the table if the principal's profit from the project meets a threshold. That is, whether a project is on the table or not depends only on that particular project's profit value. The principal then chooses the agent's favorite project among those that meet the cutoff and the default project.

\section{Maximizing expected profit}

In this section, we consider the principal's problem of finding the optimal table mechanism $d$ for maximizing expected profit $\mathbb{E}[p_{d(p,a)}]$. For the most part, we'll consider and solve the problem for the case of $N=2$ projects. We'll briefly discuss the case of $N>2$ projects towards the end of this section.

\subsection{2 projects}
In the case of two projects, we have $(p_1,a_1) \sim F$ and $(p_2,a_2)\sim F$. In a table mechanism $f$ with 2 projects, we can assume without loss of generality that $2 \in f(p)$ for all $p$ and so the principal only really has to decide the set of vectors when $1 \in f(p)$.

\begin{theorem}
\label{thm:n=2_optimal_is_cutoff}

For two projects with $(p_i,a_i) \sim F$, the optimal truthful mechanism is a cutoff mechanism. The optimal cutoff $c_1$ is defined by
$$\mathbb{E}\left[(p_1-p_2)\Pr(a_1>a_2\vert p_2)\vert p_1=c_1\right]=0$$
\end{theorem}

\begin{proof}
Suppose $d$ is truthful with associated function $f$. From Theorem \ref{thm:table}, we know that $d$ is a table mechanism. So $2 \in f(p)$ for all $p_1,p_2$. Define $c=\sup \{p_1: 1 \notin f(p_1,p_1)\}$. Define the cutoff mechanism $d'$ so that $1 \in f'(p) \iff p_1 \geq c_1$. We'll show that the expected profit for the principal from $d'$ is at least as high as the expected profit from $d$. In fact, we will show that this holds conditional on $(p_1,p_2)$, and hence in expectation.




Consider the following (exhaustive and mutually exclusive) cases depending on whether $p_i \geq c$ or $p_i<c$:

\begin{itemize}
    \item $p_1 \in (-\infty,c), p_2 \in (-\infty,c)$: For any such $p_1,p_2$, we know both $f(p)=f'(p)=\{2\}$ and therefore, the second project is chosen for all such profiles. Thus, the two mechanisms are identical and generate the same profit for the principal in this case. 
    
    \item $p_1 \in (\infty,c), p_2 \in [c,\infty)$: In this case, $f'(p)=\{2\}$ and thus project 2 is chosen for sure. Note that the principal prefers project 2 over 1 in these profiles and thus, the profit from the cutoff mechanism is weakly higher for any such $p_1,p_2$. 
    
    \item $p_1 \in [c,\infty), p_2 \in (-\infty,c)$:
    Now $f'(p)=\{1,2\}$ while $f(p)$ can be either $\{2\}$ or $\{1,2\}$. Observe that the principal strictly prefers project 1 over 2 in all these profiles. Thus, the cutoff mechanism again leads to weakly higher profits for such $p_1,p_2$. 
    
    \item $p_1 \in [c,\infty), p_2 \in [c,\infty)$: Here we have $f(p)=f'(p)=\{1,2\}$. Thus, the two mechanisms are identical in this set and lead to same profits for the principal.
    
\end{itemize}

This shows that for any truthful mechanism, there is a cutoff mechanism under which the principal's expected profit is weakly higher. Thus, the optimal truthful mechanism must be a cutoff mechanism. Now our problem is just to find the optimal cutoff $c$.

Consider the decision problem of the principal for any given $p_1,p_2,a_1,a_2$. It can either 
\begin{itemize}
    \item not make project 1 available and  get $p_2$
    \item make project 1 available and get $p_1 \mathbb{I}_{a_1 \geq a_2}+p_2\mathbb{I}_{a_2 >a_1}$ 
\end{itemize}

The constraint imposed by truthfulness and optimality of cutoff mechanisms imply that the principal can only base this decision on the value of $p_1$. Thus, taking expectation with respect to $p_2,a_1,a_2$, we get that the two alternatives are:
\begin{itemize}
    \item not make project 1 available and  get $\mathbb{E}[p_2|p_1]$
    \item make project 1 available and get $\mathbb{E}\left[p_1 \mathbb{I}_{a_1 \geq a_2}+p_2\mathbb{I}_{a_2 >a_1}|p_1\right]$ 
\end{itemize}

Thus, the principal would want to make project 1 available if and only if
\begin{align*}
   &\mathbb{E}\left[p_1 \mathbb{I}_{a_1 \geq a_2}+p_2\mathbb{I}_{a_2 >a_1}|p_1\right] \geq \mathbb{E}[p_2|p_1]\\
   \iff & \mathbb{E}\left[p_1 \mathbb{I}_{a_1 \geq a_2}|p_1\right] \geq \mathbb{E}\left[p_2 \mathbb{I}_{a_1 \geq a_2}|p_1\right] \\
   \iff & \mathbb{E}\left[(p_1-p_2) \mathbb{I}_{a_1 \geq a_2}|p_1\right] \geq 0 \\
   \iff & \mathbb{E}\left[(p_1-p_2) \mathbb{P}[a_1 \geq a_2|p_2]|p_1\right] \geq 0 \\
\end{align*}

At the optimal cutoff, the principal should be indifferent between the two alternatives and so the cutoff $c_1$ is defined by the solution  to the equation $$\mathbb{E}\left[(p_1-p_2) \mathbb{P}[a_1 \geq a_2|p_2]|p_1=c_1\right] = 0 $$
\end{proof}

In the special case where the principal and agent payoffs are independent, we get the following corollary.
\begin{corollary}
Suppose $F$ is such that the principal and agent payoffs $(p_i,a_i)$ are independent. Then the optimal cutoff is given by $c_1=\mathbb{E}[p_1]$.
\end{corollary}

\subsection{Ally principle}
In this subsection, we discuss the implications of our model for the well-known Ally principle which states that a principal delegates more authority to an agent with more aligned preferences. For this purpose, we assume that the principal agent payoffs for each project is bivariate normal $(p_i,a_i) \sim N(0,0,1,1,\rho)$  and are drawn i.i.d. across projects. Now the question is whether we can say something systematic about $c(\rho)$, the optimal cutoff as a function of the correlation $\rho$.

\begin{theorem}
\label{cutoff1}
For $N=2$ projects with $(p_i, a_i) \sim N(0,0,1,1,\rho)$, the optimal cutoff is defined by the equation $$c\Phi(tc)+t\phi(tc)=0$$ where $t=\frac{\rho}{\sqrt{2-\rho^2}}$. The optimal cutoff is decreasing in $\rho$. 
\end{theorem}



\begin{appendixproof}
\textit{of Theorem ~\ref{cutoff1}}

We know that the optimal cutoff $c(\rho)$ is the solution to the following equation 
$$\mathbb{E}\left[(p_1-p_2) \mathbb{P}[a_1 \geq a_2|p_2]|p_1=c(\rho)\right] = 0 $$

Let us simplify the above expression. First, we want to find $\mathbb{P}[a_1 \geq a_2|p_1, p_2]$. We know that if $X,Y \sim N(\mu_x,\mu_y,\sigma^2_x, \sigma^2_y,\rho)$, then the conditional distribution $$X \mid Y \sim N\left(\mu_{x}+\rho \frac{\sigma_{x}}{\sigma_{y}}\left(y-\mu_{y}\right), \sigma^2_{x} (1-\rho^{2})\right)$$
and so in our case, $a_i|p_i \sim N(\rho p_i, 1-\rho^2)$. Also, since the payoffs are independent across projects, we get that
$$a_1-a_2|p_1,p_2 \sim N(\rho (p_1-p_2), 2(1-\rho^2))$$
Using this, we get that $$\mathbb{P}[a_1-a_2\geq 0|p_1, p_2]=\Phi\left(\frac{\rho(p_1-p_2)}{\sqrt{2(1-\rho^2)}}\right)$$ where $\Phi$ is the standard normal cdf. 

Plugging this into the equation, we get that the optimal cutoff satisfies
$$\mathbb{E}\left[(p_1-p_2)\Phi\left(\frac{\rho(p_1-p_2)}{\sqrt{2(1-\rho^2)}}\right) \middle\vert p_1=c(\rho)\right] = 0 $$


We can find that the expectation is equal to
$$p_1 \Phi \left(\dfrac{\rho p_1}{\sqrt{2-\rho^2}} \right)+\dfrac{\rho}{\sqrt{2-\rho^2}}\phi \left(\dfrac{\rho p_1}{\sqrt{2-\rho^2}} \right)$$ and letting
$t=\dfrac{\rho}{\sqrt{2-\rho^2}}$, we have that the optimal cutoff $c$ is implicitly defined by
$$c\Phi(tc)+t\phi(tc)=0 $$
where $\Phi$ and $\phi$ represent the standard normal cdf and pdf respectively. Observe that $t \in [-1,1]$ and is increasing in $\rho$. 


Now letting  $F(c,t)=c\Phi(tc)+t\phi(tc)$, we can use the implicit function theorem to get that 
\begin{align*}
    c'(t)&=-\dfrac{F_t}{F_c}\\
    &=-\dfrac{c^2\phi(tc)+\phi(tc)-t^2c^2\phi(tc)}{\Phi(tc)+tc\phi(tc)-t^3c\phi(tc)}\\
    &=-\dfrac{\phi(tc)\left(1+c^2(1-t^2)\right)}{\Phi(tc)+tc\phi(tc)(1-t^2)}\\
    &=-\dfrac{\phi(tc)\left(1+c^2(1-t^2)\right)}{\Phi(tc)\left(1-c^2(1-t^2)\right)}
\end{align*}


Note that $c(0)=0$ and so $c'(0)=-2\phi(0)<0$. Also observe from the above expression that $c'(t)$ is never $0$ and so it follows from the smoothness of $c$ that $c'(t)<0$ for all $t\in (-1,1)$. Thus, we have that the optimal cutoff is decreasing in $\rho$. 
\end{appendixproof}

The proof proceeds by applying the condition obtained in Theorem ~\ref{thm:n=2_optimal_is_cutoff} for the case of the bivariate normal distribution. Using formulas for integrals of normal cdfs from \citet{owen_table_1980}, we obtain the condition that the optimal cutoff must satisfy $c\Phi(tc)+t\phi(tc)=0$ where $t=\frac{\rho}{\sqrt{2-\rho^2}}$. We can then differentiate this equation and get that $c'(0)<0$. The smoothness of $c$ and the fact that $c'(t)$ is never zero implies that $c'(t)<0$ for all $t$. The formal proof is in the appendix.

\begin{figure}[H]
  \centering
 \includegraphics[width=10cm,height=8cm]{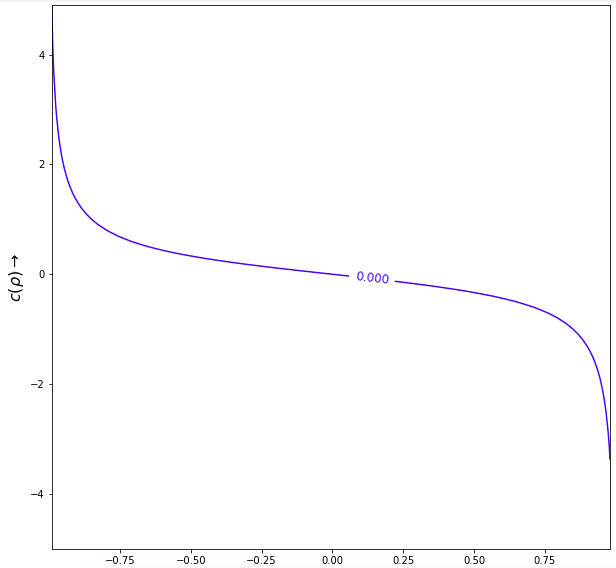}
 \caption{The optimal cutoff for maximizing expected profit decreases as the correlation between principal and agent payoffs increases.}
\end{figure}

\subsection{N projects}

In this subsection, we discuss the case of $N>2$ projects. We believe that a cutoff mechanism should continue to be optimal but we haven't been able to prove it yet. Motivated by the result for two projects, we conjecture that a cutoff mechanism is optimal for an arbitrary number of projects and consider the problem of finding the optimal cutoff mechanism.\\

The following theorem gives the optimal cutoff mechanism when the principal agent payoffs are i.i.d uniform.

\begin{theorem}
\label{thm:optimal_cutoff_value}
For $N$ projects and $F$ uniform on $[0,1]^2$, the optimal cutoff mechanism has a single cutoff $c(N)$ that is defined by the equation 
$$ N(1-c)(1 - c + c^N) = 1 - c^N.$$  The principal's expected utility from the corresponding optimal cutoff mechanism is given by $$\frac{1}{2} + \frac{c(N)}{2}(c(N) - c(N)^N)$$
\end{theorem}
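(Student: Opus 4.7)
The plan is to derive a closed-form expression for the expected profit $\Phi(\vec c)$ under an arbitrary cutoff mechanism with cutoffs $\vec c = (c_1,\ldots,c_{N-1})$, reduce the optimization to the one-parameter family of single-cutoff mechanisms, and then optimize that scalar problem to recover both the first-order condition and the utility formula.

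For the profit computation, set $X_i = \mathbf{1}[p_i \ge c_i]$, independent $\mathrm{Bernoulli}(1-c_i)$ for $i<N$. Conditional on $p$, the on-table set is $S(p)=\{N\} \cup \{i<N : X_i=1\}$; because the payoffs $a_i$ are i.i.d.\ uniform and independent of $p$, the agent's favorite in $S$ is uniform on $S$. Using $\mathbb{E}[p_i \mid p_i \ge c_i] = (1+c_i)/2$ and $\mathbb{E}[p_N]=1/2$, linearity gives
$$\Phi(\vec c) = \mathbb{E}\!\left[\frac{\tfrac{1}{2} + \sum_{i<N} X_i \cdot \tfrac{1+c_i}{2}}{1 + \sum_{i<N} X_i}\right].$$

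To collapse this to a single cutoff, I would fix all coordinates except some pair $(c_i,c_j)$ and show that $\Phi$ restricted to that two-dimensional slice is symmetric in $(c_i,c_j)$ and concave in $c_i-c_j$ on each level set of $c_i+c_j$. Symmetry plus concavity then force the slice maximum onto the diagonal $c_i=c_j$, and iterating across pairs collapses any interior optimum to a single common cutoff. This concavity check is the main obstacle: using $1/n = \int_0^1 t^{n-1}\,dt$ to linearize the reciprocal makes the relevant object polynomial in $\vec c$, reducing the claim to a polynomial inequality. As a sanity check, for $N=3$ the explicit computation yields a $v^2$-coefficient equal to $(u-3)/6<0$ (with $u=(c_i+c_j)/2$, $v=(c_i-c_j)/2$), and I expect an analogous sign computation to go through for general $N$.

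Given the reduction, specialize $c_1=\cdots=c_{N-1}=c$ and let $K \sim \mathrm{Binomial}(N-1,1-c)$. The identity $\tfrac{\tfrac12 + k\tfrac{1+c}{2}}{k+1} = \tfrac{1+c}{2} - \tfrac{c}{2(k+1)}$ lets us rewrite $\Phi(c) = \tfrac{1+c}{2} - \tfrac{c}{2}\,\mathbb{E}[1/(K+1)]$, and combining $\binom{N-1}{k}/(k+1) = \binom{N}{k+1}/N$ with the binomial theorem gives $\mathbb{E}[1/(K+1)] = (1-c^N)/[N(1-c)]$ and hence
$$\Phi(c) = \frac{1+c}{2} - \frac{c(1-c^N)}{2N(1-c)}.$$
Setting $\Phi'(c)=0$ and clearing denominators produces, after routine algebra, the FOC $N(1-c)(1-c+c^N) = 1-c^N$; substituting this FOC back in the equivalent form $(1-c^N)/[N(1-c)] = 1-c+c^N$ into $\Phi(c)$ telescopes the expression to $\tfrac{1}{2} + \tfrac{c(c-c^N)}{2}$, which is the stated optimal utility.
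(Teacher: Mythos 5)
Your closed-form computation for the single-cutoff case is correct and essentially identical to the paper's: the identity $\tfrac{1}{k+1}\bigl(\tfrac12 + k\tfrac{1+c}{2}\bigr) = \tfrac{1+c}{2} - \tfrac{c}{2(k+1)}$ together with $\mathbb{E}[1/(K+1)] = \tfrac{1-c^N}{N(1-c)}$ for $K\sim\mathrm{Bin}(N-1,1-c)$ reproduces the paper's $EU_p(c) = \tfrac12 + \tfrac{c}{2}\bigl(1-\tfrac{1-c^N}{N(1-c)}\bigr)$ (the paper gets the same binomial sum by conditioning on the rank of project $N$), and your derivation of the first-order condition and the back-substitution into the utility are both right. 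The general expression $\Phi(\vec c)$ for unequal cutoffs is also correct.

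The genuine gap is the reduction to a single cutoff. Your plan hinges on showing that, on each slice $c_i+c_j=\mathrm{const}$ with $t=(c_i-c_j)/2$, the objective is concave in $t$ — but you verify the sign of the $t^2$-coefficient only for $N=3$ and explicitly defer the general case, which you yourself identify as ``the main obstacle.'' That is the entire content of the reduction, so as written the proof is incomplete; moreover it is not obvious that the coefficient is negative for every $N$ and every configuration of the remaining cutoffs, and if it were nonnegative on some slice your pairwise-averaging iteration would push the optimum \emph{off} the diagonal rather than onto it. The paper avoids ever determining this sign: it first shows separately that no optimal mechanism can have a cutoff equal to $0$ or $1$ (an interiority step your proposal omits entirely), then observes that by the $i\leftrightarrow j$ symmetry the slice profile is an even polynomial of degree at most $2$, i.e.\ of the form $at^2+b$, and runs a trichotomy: if $a>0$ or $a<0$ a strict interior improvement in $t$ exists, and if $a=0$ one may slide $t$ to a boundary configuration of equal value, which interiority has already shown to be strictly dominated. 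Adopting that trichotomy (plus the interiority lemma, and a compactness argument for existence) would close your gap without having to prove the concavity you are missing. You should also say a word about why the scalar maximum of $\Phi(c)$ is interior ($\Phi(0)=\lim_{c\to1}\Phi(c)=\tfrac12$ while $\Phi>\tfrac12$ inside), since otherwise setting $\Phi'(c)=0$ does not by itself locate the maximizer.
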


\begin{appendixproof}
\textit{of Theorem \ref{thm:optimal_cutoff_value}}

For any arbitrary cutoff mechanism with cutoffs $c=(c_1,c_2...c_{N-1})$, we compute the expected utility of the principal. For the below expressions, consider $c_N=0$.

\begin{align*}
    EU_p(c)= \sum_{i=1}^N \frac{(1+c_i)}{2} \mathbb{P}(d=i) \\
\end{align*}

Note that $\mathbb{P}(d=i)=(1-c_i)\mathbb{P}(d=N|c_i=0)$. That is, conditional on $p_i \geq c_i$, the probability that the decision is $i$ is the same as the probability that the decision is $N$ when the cutoff $c_i=0$ and the remaining cutoffs are the same.  To find the probability that the last project $N$ is chosen, we condition on its rank which is defined in terms of $a_i$s. That is, the rank of $N$ is $k$ if there are exactly $k-1$ projects with higher $a_i$s.

\begin{align*}
    \mathbb{P}(d=N)&=\sum_{k=1}^{N} \mathbb{P}(\text{rank of } N =k)\mathbb{P}(d=N|\text{ rank of } N =k)\\
    &= \frac{1}{N} \sum_{k=1}^N \mathbb{P}(d=N|\text{ rank of } N =k)\\
    &= \frac{1}{N} \sum_{k=1}^N \sum_{S \subset [N-1]: |S|=k-1} \frac{\Pi_{i \in S} c_i}{ {N-1\choose k-1}} \\
\end{align*}

 We first argue that cutoffs have to be interior in the optimal cutoff mechanism. Suppose $d$ is a cutoff mechanism with cutoffs  $c=(c_1,c_2...c_{N-1})$ and $c_i=1$. In this case, let $u^*$ denote the expected utility of the principal. Note that $u^* < 1$. Define a new cutoff mechanism in which all cutoffs remain the same except for $i$ which now has the cutoff $u^*$. Now, with probability $u^*$, the principal gets $u^*$ and with probability $1-u^*$, the principal gets a convex combination of $u^*$ and $\frac{1+u^*}{2}>u^*$. This means that his expected payoff under the new mechanism is $>u^*$. Therefore, an optimal mechanism cannot have the cutoff 1. Now, suppose there is an $i \in [N-1]$ which has a cutoff of 0. Observe that the expected utility from any arbitrary project conditional on being chosen is $\frac{1+c}{2} \geq 0.5$. Now, consider increasing the cutoff to $c_i=\frac{1}{2}$ in the new mechanism while keeping every other cutoff the same. For every $p_i<\frac{1}{2}$, under the old mechanism, the principal's payoff was some convex combination of $p_i$ and some $k\geq 1/2$. Under the new mechanism, it is just $k>p_i$. For $p_i \geq \frac{1}{2}$, the new mechanism is identical to the old one. Therefore, an optimal mechanism cannot have a cutoff of 0. Thus, we know that in the optimal cutoff mechanism, $c_i \notin \{0,1\}$ for any $i \in [N-1]$.   \\

Now suppose that the  mechanism $d$ is such that there exist $i,j$ with $c_i > c_j$. Define $t$ so that $\bar{c}+t=c_i$ and $\bar{c}-t=c_j$. From the above calculations, we know that if we write $EU_p(c)$ in expanded form and plug in $c_i=\bar{c}+t$ and $c_j=\bar{c}-t$, we get a polynomial that is at most cubic in $t$. This is because we get a term that is at most quadratic in $t$ for $\mathbb{P}(d=k)$ for any $k \in [n]$ and in the expected utility calculation, we multiply that with $\frac{1+c_k}{2}$. Note that by the symmetry of the projects, the principal should get the same expected utility if we changed $t$ to $-t$. Therefore, the polynomial should be of the form $at^2+b$. Now, if $a$ is $>0$ or $<0$, the principal gains from increasing or decreasing $t$ which is possible since we know that the solution is interior and $c_i>c_j$. Therefore, $d$ cannot be optimal in either case. When $a=0$, the principal is indifferent to increasing $t$ till one of the cutoffs reaches an extreme of $1$ or $0$ which we know cannot be optimal. Therefore, a cutoff mechanism with different cutoffs cannot be optimal.\\

The above discussion implies that the solution to the optimization problem has to be a single cutoff mechanism. Let $c$ be the single cutoff. Using the above calculations, we have that

\begin{align*}
    \mathbb{P}(d=N)&= \frac{1}{N} \sum_{k=1}^N \sum_{S \subset [N-1]: |S|=k-1} \dfrac{\Pi_{i \in S} c_i}{ {N-1\choose k-1}} \\
    &= \frac{1}{N}  \sum_{k=1}^N c^{k-1}\\
    &= \frac{1}{N}\frac{1-c^N}{1-c}\\
\end{align*}

Therefore, 

\begin{align*}
    EU_p(c)&=\frac{1}{2} \mathbb{P}(d=N)+\frac{1+c}{2} \mathbb{P}(d \neq N)\\
    &=\frac{1}{2}+\frac{c}{2} \left(1-\frac{1-c^N}{N(1-c)}\right)\\
\end{align*}

Differentiating with respect to $c$ gives the desired optimal cutoff mechanism defined by single cutoff $c(N)$.

\begin{align*}
    \dfrac{\partial EU_p(c)}{\partial c} &= \dfrac{1}{2}\left(1-\frac{1-c^N}{N(1-c)}\right)-\frac{c}{2N} \left(\frac{(1-c)(-Nc^{N-1})+(1-c^N)}{(1-c)^2}\right) \\
    &= \frac{1}{2}+\dfrac{c^N}{2(1-c)}-\dfrac{1-c^N}{2N(1-c)^2}\\
\end{align*}

Setting it equal to zero gives us:

$$N(1-c)(1-c+c^N)=1-c^N$$ 

Plugging in the expected utility expression gives us the maximum utility of the principal in the class of cutoff mechanisms: $\frac{1}{2}+\frac{c(N)}{2}(c(N)-c(N)^N)$.
\end{appendixproof}

Note that since we are maximizing a continuous function over a compact space, a solution exists. The rest of the proof proceeds in three steps. In step 1, we show that in any cutoff mechanism which has a cutoff that is $0$ or $1$ cannot be optimal. That is, the solution must be interior. In step 2,  we show that a cutoff mechanism with different cutoffs cannot be optimal. Finally, we maximize the principal's utility with respect to the single cutoff $c$ to get the optimal cutoff mechanism. The proof is relegated to the appendix.
While it is hard to obtain an exact closed form solution for the optimal single cutoff, we show that it has the following asymptotic property.

\begin{lem}
\label{lem:explicit-cutoff}
The optimal single cutoff $c(N)= 1 - \frac{1}{\sqrt{N}} + o\left(\frac{1}{\sqrt{N}}\right).$
\end{lem}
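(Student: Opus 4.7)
The plan is to pass to the rescaled variable $x = \sqrt{N}(1-c)$, in which the claim becomes $x_N \to 1$. Writing $\delta = 1-c$, the defining equation of Theorem \ref{thm:optimal_cutoff_value} rearranges to
$$N\delta^2 + (1+N\delta)(1-\delta)^N \;=\; 1,$$
and substituting $\delta = x/\sqrt{N}$ yields
$$x^2 + \Phi_N(x) \;=\; 1, \qquad \Phi_N(x) := (1+x\sqrt{N})(1-x/\sqrt{N})^N. \qquad (\star)$$
The intuition is that for any $x$ bounded away from $0$, $\Phi_N(x) \to 0$, since $(1-x/\sqrt{N})^N \le e^{-x\sqrt{N}}$ decays exponentially while $(1+x\sqrt{N})$ grows only polynomially; so $(\star)$ should force $x^2 \to 1$.

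Making this rigorous, the upper bound $x_N \le 1$ is immediate since $\Phi_N \ge 0$ on $[0,\sqrt{N}]$. For the matching lower bound, fix $\epsilon \in (0,1)$ and observe
$$\sup_{x \in [\epsilon, 1]} \Phi_N(x) \;\le\; (1 + \sqrt{N})\,e^{-\epsilon\sqrt{N}} \;\longrightarrow\; 0.$$
Hence the left-hand side of $(\star)$ at $x = 1-\epsilon$ tends to $(1-\epsilon)^2 < 1$, while at $x = 1$ it equals $1 + \Phi_N(1) > 1$. By the intermediate value theorem, a solution of $(\star)$ lies in $(1-\epsilon, 1)$ for all sufficiently large $N$.

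The final step is to verify that this IVT solution is in fact the $x_N$ defined by the theorem, not a spurious root. Returning to $\delta$, set $f(\delta) := N\delta^2 + (1+N\delta)(1-\delta)^N - 1$; a direct computation gives $f'(\delta) = N\delta\bigl[2 - (N+1)(1-\delta)^{N-1}\bigr]$, so $f$ is strictly decreasing on $(0, \delta^*)$ and strictly increasing on $(\delta^*, 1)$ with $\delta^* := 1 - (2/(N+1))^{1/(N-1)}$. Combined with $f(0) = 0$, $f(\delta^*) < 0$, and $f(1) = N - 1 > 0$, this forces $f$ to have a unique positive root $\bar\delta \in (\delta^*, 1)$. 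Because $\delta^* \sim \ln N / N$, we have $\sqrt{N}\delta^* \to 0$, so the interval $[1-\epsilon, 1]$ lies in the strictly monotone region on the $x$-scale for $N$ large, and the IVT root coincides with $\sqrt{N}\bar\delta = x_N$. Letting $\epsilon \to 0$ completes the proof. The main obstacle is precisely this last uniqueness step: without it, the expansion $(1+u)e^{-u} = 1 - u^2/2 + O(u^3)$ could mislead one into suspecting a spurious root of $(\star)$ near $x = 0$, so the monotonicity analysis of $f$ on the original scale is essential.
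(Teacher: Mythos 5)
Your proof is correct and follows essentially the same route as the paper: rescale $\delta=1-c$ by $\sqrt{N}$, observe that the term involving $(1-\delta)^N$ is exponentially negligible for $x$ bounded away from $0$, and squeeze the root so that $x_N=\sqrt{N}(1-c(N))\to 1$. The one substantive addition is your final monotonicity analysis of $f(\delta)=N\delta^2+(1+N\delta)(1-\delta)^N-1$, which proves that the first-order condition has a \emph{unique} positive root; the paper's proof simply refers to ``the unique root'' without establishing this, so your sign computation for $f'$ closes a gap the paper leaves implicit (and your upper bound $x_N\le 1$, obtained for free from $\Phi_N\ge 0$, is slightly sharper than the paper's $x_N\le 1+\epsilon$).
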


\begin{appendixproof}
\textit{of lemma \ref{lem:explicit-cutoff}}

Let
\begin{align*}
    \phi_N(c) = N(1-c)(1 - c + c^N) - (1 - c^N).
\end{align*}
Then for any $\alpha > 0$,
\begin{align*}
    \lim_{N \to \infty}{\phi_N \left(1 - \frac{\alpha}{\sqrt{N}}\right)} = \alpha^2 - 1,
\end{align*}
and so for all sufficiently large $N$, the quantity
\begin{align*}
    \phi_N \left(1 - \frac{\alpha}{\sqrt{N}}\right)
\end{align*}
is positive if and only if $\alpha > 1$ and negative if and only if $\alpha < 1$. Hence, for any $\epsilon > 0$, it follows that the unique root $c(N)$ of the equation from Theorem \ref{thm:optimal_cutoff_value} satisfies
\begin{align*}
    (1-\epsilon) \frac{1}{\sqrt{N}} \leq 1 - c(N) \leq (1+\epsilon) \frac{1}{\sqrt{N}}.
\end{align*}
\end{appendixproof}

It follows from Lemma \ref{lem:explicit-cutoff} that as $N \to \infty$, the optimal single cutoff $c(N) \to 1$ and the expected utility of the principal $\to 1$.
\section{Maximizing probability of best project}

In this section, we consider the objective of maximizing the probability of choosing the best project for the principal $\mathbb{P}[p_{d(p,a)} \geq p_j \text{ for all } j]$.  We'll only focus on the case of $N=2$ projects for this part. 

\subsection{2 projects}

Let's consider the case of two projects and table mechanisms with project 2 always on the table. 

\begin{theorem}
\label{thm:n=2_optimal_is_cutoff_prob}

For two projects with $(p_i,a_i) \sim F$, the optimal truthful mechanism is a cutoff mechanism. The optimal cutoff $c_1$ is defined by
$$\mathbb{E}\left[(\mathbb{I}_{p_1\geq p_2}-\mathbb{I}_{p_2>p_1})\mathbb{I}_{a_1\geq a_2}\vert p_1=c_1\right]=0$$
or more simply
$$\mathbb{P}[p_1 \geq p_2\vert p_1=c_1,a_1 \geq a_2]=\frac{1}{2}$$

\end{theorem}

\begin{proof}
The argument for why cutoff is optimal is exactly the same as in the proof of Theorem ~\ref{thm:n=2_optimal_is_cutoff}. We now derive the optimal cutoff. 
The principal can either 
\begin{itemize}
    \item make project 1 available and get $\mathbb{I}_{p_1\geq p_2}\mathbb{I}_{a_1\geq a_2}+\mathbb{I}_{p_2> p_1}\mathbb{I}_{a_2> a_1}$
    \item or not make it available and get $\mathbb{I}_{p_2>p_1}$
\end{itemize}

Since the principal can only make the decision based on value of $p_1$, it will chose to make project 1 available if and only if 
\begin{align*}
    &\mathbb{E}[\mathbb{I}_{p_1\geq p_2}\mathbb{I}_{a_1\geq a_2}+\mathbb{I}_{p_2>p_1}\mathbb{I}_{a_2> a_1}\vert p_1]\geq  \mathbb{E}[\mathbb{I}_{p_2>p_1} \vert p_1]\\
    \iff &\mathbb{E}\left[(\mathbb{I}_{p_1\geq p_2}-\mathbb{I}_{p_2>p_1})\mathbb{I}_{a_1\geq a_2}\vert p_1\right]\geq 0
\end{align*}

At the cutoff, the principal must be indifferent between making or not making project 1 available. This gives the desired condition. 
\end{proof}
In the special case where the principal and agent payoffs are independent, we get the following corollary.
\begin{corollary}
Suppose $F$ is such that the principal and agent payoffs $(p_i,a_i)$ are independent. Then the optimal cutoff is given by $c_1=Med[p_1]$.
\end{corollary}

\subsection{Ally principle}
Assume that the principal agent payoffs for each project is bivariate normal $(p_i,a_i) \sim N(0,0,1,1,\rho)$  and are drawn i.i.d. across projects. Now the question is whether we can say something systematic about $c(\rho)$, the optimal cutoff as a function of the correlation $\rho$.

\begin{theorem}
\label{cutoff2}
For $N=2$ projects with $(p_i, a_i) \sim N(0,0,1,1,\rho)$, the optimal cutoff $c(\rho)$ is given by the equation $$\dfrac{\mathbb{P}[X \leq c, Y \leq tc]}{\mathbb{P}[Y \leq tc]}=\frac{1}{2}$$ where $X,Y \sim N(0,0,1,1,t)$ and $t=\dfrac{\rho}{\sqrt{2-\rho^2}}$. 

\end{theorem}

\begin{appendixproof}
\textit{of Theorem ~\ref{cutoff2}}

Given the distributional form, we have

$$a_1-a_2|p_1 \sim N(\rho p_1, 2-\rho^2)$$ and therefore, 
$$p_2 | a_1-a_2, p_1 \sim N\left(\dfrac{\rho^2p_1-\rho (a_1- a_2)}{2-\rho^2}, \dfrac{2(1-\rho^2)}{2-\rho^2}\right)$$

This gives
$$\mathbb{P}\left[p_2 \leq p_1|p_1,a_1-a_2\right]=\Phi \left(\dfrac{2p_1(1-\rho^2)+\rho(a_1-a_2)}{\sqrt{2(1-\rho^2)(2-\rho^2)}}\right)$$

Then, 
\begin{align*}
    \mathbb{P}[p_2 \leq p_1\vert p_1,a_1 \geq a_2]&=\frac{1}{\mathbb{P}[a_1\geq a_2|p_1]}\int_0^\infty \mathbb{P}[p_2 \leq p_1\vert p_1,a_1- a_2=x]f(a_1-a_2=x|p_1)dx \\
    &=\frac{1}{\Phi\left(\dfrac{\rho p_1}{\sqrt{2-\rho^2}}\right)}\int_0^\infty \Phi \left(\dfrac{2p_1(1-\rho^2)+\rho x}{\sqrt{2(1-\rho^2)(2-\rho^2)}}\right)\frac{e^{-\frac{1}{2}\left(\frac{x-\rho p_1}{\sqrt{2-\rho^2}}\right)^2}}{\sqrt{2-\rho^2}\sqrt{2\pi}}dx\\
    &=\frac{1}{\Phi\left(\dfrac{\rho p_1}{\sqrt{2-\rho^2}}\right)}\int_\frac{-\rho p_1}{\sqrt{2-\rho^2}}^\infty \Phi \left(\dfrac{2p_1(1-\rho^2)+\rho(t\sqrt{2-\rho^2}+\rho p_1)}{\sqrt{2(1-\rho^2)(2-\rho^2)}}\right)\phi(t)dt\\
    &=\frac{1}{\Phi\left(\dfrac{\rho p_1}{\sqrt{2-\rho^2}}\right)}\int_\frac{-\rho p_1}{\sqrt{2-\rho^2}}^\infty \Phi \left(\dfrac{p_1\sqrt{2-\rho^2}+\rho t}{\sqrt{2(1-\rho^2)}}\right)\phi(t)dt\\
    &=\mathbb{E}\left[\Phi\left(\dfrac{p_1\sqrt{2-\rho^2}+\rho t}{\sqrt{2(1-\rho^2)}}\right) \bigg\vert t \geq \dfrac{-p_1\rho}{\sqrt{2-\rho^2}}\right]
\end{align*}

Observe that if $f(p_1,\rho)$ is the above expectation, then we have $f(p_1,\rho)+f(-p_1,-\rho)=1$. Thus, we can conclude that $c(-\rho)=-c(\rho)$ for all $\rho \in [0,1]$. So let's focus on $\rho>0$ and try to argue that the optimal cutoff must be decreasing in $\rho$. 


The expectation above equals
$$\dfrac{\mathbb{P}[X \leq p_1, Y \leq t p_1]}{\mathbb{P}[Y \leq t p_1]}$$ where $X,Y \sim N(0,0,1,1,t)$ and $t=\dfrac{\rho}{\sqrt{2-\rho^2}}$ and this completes the proof.
\end{appendixproof}

Again, we use the definition of optimal cutoff from Theorem ~\ref{thm:n=2_optimal_is_cutoff_prob} and apply it to the bivariate normal case. Then, using the integral formulas from \citet{owen_table_1980}, we show that the optimal cutoff in this case is defined by $\dfrac{\mathbb{P}[X \leq c, Y \leq t c]}{\mathbb{P}[Y \leq t c]}$ where $X,Y \sim N(0,0,1,1,t)$ and $t=\dfrac{\rho}{\sqrt{2-\rho^2}}$.  We haven't been able to show that this result implies that the optimal cutoff is decreasing in $\rho$, but the following contour plot from Python suggests that it is:

\begin{figure}[H]
  \centering
 \includegraphics[width=10cm,height=8cm]{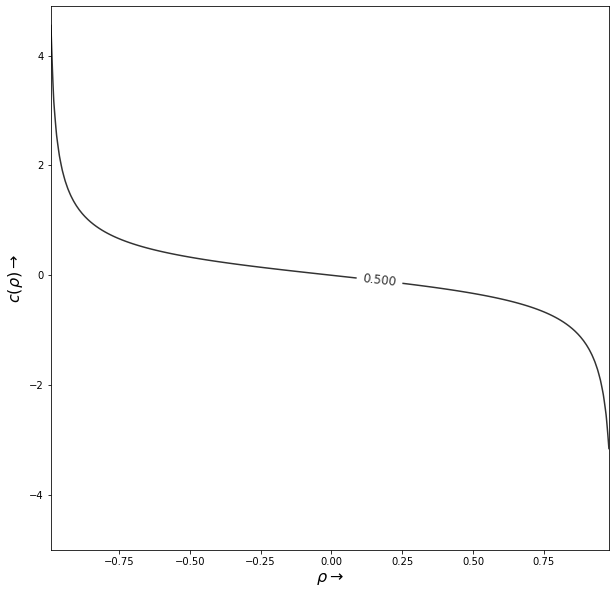}
 \caption{The optimal cutoff for maximizing probability of choosing better project as a function of correlation (contour plot made in Python).}
\end{figure}

\section{Remarks}

\begin{enumerate}[leftmargin=*]
    \item \textbf{Comparison of optimal mechanisms for the two objectives} We wanted to see how the optimal cutoffs for the two objectives compare when payoffs are bivariate normal. To do so, we plotted the two optimal cutoff curves together and interestingly found that the curves coincide. While we haven't been able to formally prove that the solutions coincide, we conjecture that they do.
    \begin{conjecture}
    With $N=2$ projects and payoffs $(p_i,a_i)$ drawn iid from bivariate normal $N(0,0,1,1,\rho)$, the optimal mechanism for maximizing principal's expected profit and that for maximizing probability of choosing better project is the same cutoff mechanism with the cutoff $c(\rho)$ defined by $c\Phi(tc)+t\phi(tc)=0$ where $t=\frac{\rho}{\sqrt{2-\rho^2}}$
    \end{conjecture}
    
    \item \textbf{Delegation interpretation of the optimal mechanism} The simplest implementation of the optimal cutoff mechanism has a nice delegation interpretation. The principal selects a cutoff profit and a default project and delegates the project choice to the agent, in the sense that the agent can select either the default project or a project which meets the cutoff profit, and the principal signs off on the final decision. Under this delegation, the agent chooses his favorite project among those that meet the cutoff and the default project. Note in particular that this implementation only requires the agent to report information about the chosen project. This is outcome-equivalent to the cutoff mechanism. We note that many instances of ``cutoff mechanisms" with flavors similar to ours have appeared in the literature, but the optimality of such mechanisms has been driven by the assumption of ex-post verifiability (\citet{ben2014optimal}, \citet{mylovanov2017optimal} \citet{armstrong2010model}). In particular, in most previous models the principal's ability to punish in the case of a misreport is tantamount to the assumption that the agent cannot lie. Here, we offer an alternative way of rationalizing such cutoff mechanisms via the no-overselling (or more generally, interim partial verifiability) constraint, which alters the agents incentives but \textit{not} by threatening the agent in the case of a misreport. To help elucidate this point, we make the following observation. If our model were altered so that the agent had an unconstrained message space, but the principal were required to take the default project in case the agent should oversell any of the projects, then all of our results would carry over.

\end{enumerate}
\section{Conclusion}
We consider a principal agent project selection problem with asymmetric information. When the agent can lie arbitrarily, we find that the principal cannot gain anything from commitment power and may as well choose a project at random. In contrast, if the principal can identify or induce partial verifiability in the environment so that the agent cannot oversell any of the projects, then a simple cutoff mechanism is  optimal for the case of two projects, both for maximizing expected profit and for maximizing probability of choosing better project. 
In the particular case where payoffs are bivariate normal, we find that the optimal cutoff is decreasing in the correlation between payoffs and thus, our model provides evidence in favor of the ally principle which says that the principal grants more leeway to an agent who shares its preferences. We conjecture that our results for the case of two project extend to settings with more than two projects as well. \\

\nocite{*}

\bibliographystyle{ecta}

\bibliography{refs}

\end{document}